\newtheorem{theorem}{Theorem}
\newtheorem{problem}{Problem}
\theoremstyle{definition}
\newtheorem{example}{Example}
\DeclareMathOperator*{\argmaxA}{arg\,max} 
\newcommand{\norm}[1]{\left\lVert#1\right\rVert}
\newcommand\xqed[1]{%
  \leavevmode\unskip\penalty9999 \hbox{}\nobreak\hfill
  \quad\hbox{#1}}
\newcommand\demo{\xqed{$\square$}}
\title{\LARGE \bf
Neural Network-based Control for Multi-Agent Systems from  Spatio-Temporal Specifications}
\author{Suhail Alsalehi, Noushin Mehdipour, Ezio Bartocci and Calin Belta 
\thanks{This work was partially supported by the National Science Foundation under grants IIS-2024606 and IIS-1723995 at Boston University and by the Austrian FFG-funded IoT4CPS project at TU Wien.}
\thanks{Suhail Alsalehi, Noushin Mehdipour and Calin Belta are with the Division of Systems Engineering, Boston University, Boston, MA 02215 USA
        {\tt\small (alsalehi, noushinm, cbelta)@bu.edu}}%
\thanks{Ezio Bartocci is with TU Wien, Vienna, Austria
        {\tt\small ezio.bartocci@tuwien.ac.at}}%
}
\begin{document}

\maketitle
\thispagestyle{empty}
\pagestyle{empty}

\begin{abstract}
We propose a framework for solving control synthesis problems for multi-agent networked systems required to satisfy spatio-temporal specifications. We use Spatio-Temporal Reach and Escape Logic (STREL) as a specification language. For this logic, we define smooth quantitative semantics, which captures the degree of satisfaction of a formula by a multi-agent team. We use the novel quantitative semantics to map control synthesis problems with STREL specifications to optimization problems and propose a combination of heuristic and gradient-based methods to solve such problems. As this method might not meet the requirements of a real-time implementation, we develop
a machine learning technique that uses the results of the off-line optimizations to train a neural network that gives the control inputs at current states. We illustrate the effectiveness of the proposed framework by applying it to a model of a robotic team required to satisfy a spatial-temporal specification under communication constraints. 
\end{abstract}

\section{Introduction}
\label{sec:intro}

Multi-agent systems are used as models in many applications, ranging from robotics to power networks, smart cities, and synthetic biology. Planning and controlling the motions of multi-agent systems are difficult problems, which have received a lot of attention in recent years. One of the main challenges is specifying their motions. Existing approaches include consensus algorithms \cite{bullo2009distributed,mesbahi2010graph}, in which the specification is to reach a desired global state (e.g., minimum / maximum inter-robot separation, specified centroid, heading alignment, etc.) and abstractions, in which a team is parameterized by a set of features (e.g., mean, variance, orientation), \cite{Belta-TRO04,Lynch-swarm}. 

Recently, spatio-temporal logics have emerged as formal ways to specify both spatial and temporal logic requirements for spatially distributed systems \cite{Haghighi2016,Liu2020}. Examples include Spatial Temporal Logic (SpaTeL)~\cite{Haghighi2015}, 
Spatial  Aggregation  Signal  Temporal  Logic
(SaSTL)~\cite{SaSTL}, and Spatio-Temporal Reach and Escape Logic (STREL)~\cite{Bartocci2017,Bartocci2020}. 
SpaTeL is the unification of Signal Temporal Logic (STL)~\cite{STL2004} and the spatial logic Tree Spatial Superposition Logic (TSSL) \cite{TSSL}. Even though it was used as a specification language for a multi-robot system \cite{Haghighi2016}, SpaTel cannot capture the satisfaction for individual agents. Furthermore, TSSL is computationally very expensive. 
SaSTL extends STL with two operators for expressing spatial aggregation and spatial counting, and it was used to monitor  safety and performance requirements of smart cities. 
STREL extends STL with the spatial operators \emph{reach} and \emph{escape}, from which it is possible to derive the spatial modalities \emph{everywhere}, \emph{somewhere}, and \emph{surround}. SaSTL and STREL allow for specifying the requirements for individual agents and can inform about the satisfaction of the specifications locally, as opposed to SpaTeL. 

In this work, we employ STREL to specify complex spatio-temporal requirements for individual agents in multi-agent teams. STREL formulas use Boolean operators, such as conjunction ($\wedge$), disjunction ($\vee$), and negation ($\neg$); temporal operators, such as \emph{eventually} ($F_{[a,b]}$) and  \emph{always} ($G_{[a,b]}$), where $[a,b]$ is a time interval; and spatial operators, such as \emph{reach} ($\mathcal{R}$), \emph{escape} ($\mathcal{E}$), and \emph{surround} ($\odot$). For example, the requirement \say{the agents must always surround the target in the time interval $[a,b]$} can be specified using the STREL formula $\varphi = G_{[a,b]} (agents \odot target)$. The original STREL semantics, as defined in \cite{Bartocci2017}, has no notion of spatial counting, which can be critical in some multi-agent robotic scenarios \cite{Yunus2020}. For instance, one might want to maximize the number of agents that surround a given target. 
Furthermore, the original semantics does not account for the distance between agents, which can be critical to connectivity among agents in the team. To overcome these setbacks, we propose new counting quantitative STREL semantics that allows for spatial counting and depends on the distances among agents.

Similar to STL, SpaTel and STREL are equipped with quantitative semantics, or robustness functions, which quantify the degree of satisfaction of a formula by a (temporal and spatial) trajectory of a system, and allow for mapping control problems into optimization problems. Mixed Integer Linear Program (MILP) encodings were proposed in \cite{Raman2014,Haghighi2016,Liu2018} to solve such problems. Although this method has shown some promising results, MILP encodings are complicated, and the performance times of the MILP solvers are difficult to predict. Gradient-based methods were proposed in \cite{Pant2017,Haghighi2019-as,varnai2020robustness,MehdipourAGM,Gilpin2021}. These 
have fast convergence rates and are simple to implement. However, they are prone to premature convergence and need to be initialized carefully.

To overcome the limitations of the two classes of optimization approaches mentioned above, we propose using a hybrid optimization method that combines heuristic and gradient-based algorithms \cite{mavrovouniotis2017survey,VICTOIRE200451}. The optimization is carried out in two stages. First, the search space is explored using a heuristic algorithm to find a good candidate solution, ideally near the global optima. In the second stage, the best candidate solution from the previous stage is improved by means of a gradient-based algorithm. 
The STREL quantitative semantics is based on 
$min/max$ functions and the robustness function is not differentiable. We use a smooth approximation \cite{Pant2017}, which allows us to employ gradient-based methods for optimization. We show that our approach provides better exploration of the search space and fast convergence times compared to the optimization approaches used in the literature.

Generating control inputs by solving optimization problems as described above is still computationally expensive, and not amenable for real-time control. In this paper, we propose an approach to real-time control using Recurrent Neural Networks (RNN) \cite{liu2020recurrent,yaghoubi2020training}. The RNN learns controllers from state-control trajectories generated off-line by solving the control synthesis problem with different initializations. Once trained, the RNN-based controller gives the control inputs based on the current state and the history states.

The main contributions of this work can be summarized as follows. First, we propose novel, smooth counting quantitative semantics for STREL, which allows for optimizing spatial configurations. Second, we propose a hybrid optimization approach for solving the control synthesis problem for a multi-agent networked system from spatio-temporal specifications given as STREL formulas. 

Third, we provide fast, RNN-based real-time controllers for the control problem stated above. 
Fourth, while hybrid algorithms have been used before, to the best of our knowledge, this is the first time they have been employed for control synthesis. 

The rest of the paper is organized as follows. Notation and preliminaries are provided in Sec.~\ref{sec:prelim}. The control synthesis problem from STREL specifications is formulated in Sec.~\ref{sec:problem}. 
The new STREL quantitative semantics is described 
in Sec.~\ref{sec:newstrel}, and the corresponding optimization approach is discussed in Sec.~\ref{sec:solOPT}. We describe how the RNN-based controllers are learnt from data in Sec.~\ref{sec:learning_control}. The effectiveness of the overall proposed framework is demonstrated in a case study in Sec.~\ref{sec:CS}. We conclude with final remarks and directions for future work in Sec. \ref{sec:conclusion}.

\section{Preliminaries}
\label{sec:prelim}

\subsection{System Dynamics and Connectivity Conditions}
\label{sec:prelim_Dyn}

Consider a team of $N$ robotic agents labeled from the set $S = \{1,2,\ldots,N\}$. Each agent $l \in S$ has a state $x_l[k]= (q_l[k] , a_l)$ at (discrete) time $k$, where $q_l[k]\in \mathcal{Q} \subset \mathbb{R}^n$ 
is its dynamical state (e.g., position in space), and $a_l \in \mathcal{A}$ is an attribute that does not change over time (e.g., agent type), where $\mathcal{A}$ is a set of labels. The state of the team at time $k$ is denoted by $x[k]= [x_1[k]^T, \ldots, x_N[k]^T]^T$. We assume that the dynamics of each agent $l\in S$ is given by: 
\begin{equation}
\label{dyn}
q_{l}{[k+1]} = f(q_{l}[k],u_l[k]),
\end{equation}
where $u_l[k] \in \mathcal{U} \subset \mathbb{R}^m$ is the control input for agent $l$ at time $k$, $\mathcal{U}$ is the set of admissible controls, and $f:\mathbb{R}^n\times \mathbb{R}^m\rightarrow\mathbb{R}^n$. The control inputs of the team at time $k$ are denoted by $u[k]=[u_1[k]^T,\ldots,u_N[k]^T]^T$.

\begin{example} \label{ex:example_state}
 Consider a team of $N=8$ robotic agents (MANET\footnotemark ) moving in a $2D$ Euclidean space (see Fig.\ref{fig:example1}). The state of agent $l\in S=\{1,..,8\}$ at time $k$ is $x_l [k]= (q_l[k], a_l)$, where $q_l \in \mathbb{R}^2$ is the position of agent $l$, and $a_l\in \mathcal{A}=\{ {endDevice}, {coordinator},{router}\}$. Specifically, $a_3=a_5=a_6 =endDevice$, $a_1 =a_7=a_8 = coordinator$ and $a_4=a_2 = router$. 
 \demo
\end{example}

\footnotetext{A Mobile Ad-hoc sensor NETwork (MANET) is a team of robotic agents connected wirelessly. The agents are usually deployed to monitor environmental changes such as pollution, humidity, light and temperature. Each agent can be equipped with sensors, processors, radio transceivers, and batteries. It can move independently in any direction and change its connection links to the other agents. Moreover, the agents can be of different types and their behaviour and communication can depend on their types.}

\begin{figure}[hbt!]
\begin{center}
\includegraphics[width=.50\linewidth]{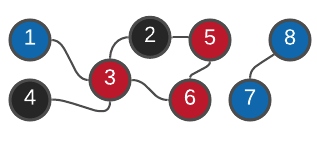}
\caption{A team of $N=8$ robotic agents with types {\color{red} $endDevice$ (red)}, {\color{black} $router$ (black)}, and {\color{blue} $coordinator$ (blue)}. The lines are edges in the connection graph.}
\label{fig:example1}
\end{center}
\end{figure}

For $H \in \mathbb{N}$, we use $\mathbf{x}_l^H$ to denote the \emph{state run} $x_l{[0]}x_l{[1]} \ldots x_l{[H]}$ of agent $l$, which is given by the dynamical state sequence $q_l{[0]}q_l{[1]} \ldots q_l{[H]}$ generated by the control sequence $u_l[0] \ldots u_l[H-1]$ starting at $q_l{[0]}$, along with the agent attribute $a_l$. We denote the state run of the team $x{[0]} x{[1]} \ldots x{[H]}$ by $\mathbf{x}^H$. Similarly, we use $\mathbf{u}^{H-1}$ to denote the control sequence of the team $u[0] \ldots u[H-1]$.

Two agents $l$ and $l'$ can communicate at time $k$ if they are connected according to given conditions. For instance, agents may be connected if the Euclidean distance between them is less than a ﬁxed communication range. Alternatively, agents may be connected according to a Delaunay triangulation or a Voronoi diagram (see Sec.~\ref{sec:CS}). We model the time-dependent inter-agent connectivity using an undirected graph (connection graph) $\lambda [k] = (S,E[k])$, where $E[k]\subseteq S\times S$. Specifically, $(l,l')\in E[k]$ means that $l$ and $l'$ are connected. We use ${\bm \lambda}^H$ to denote the sequence of connection graphs over time horizon $H$, i.e. $ {\bm \lambda}^H = \lambda[0]\lambda[1] \ldots \lambda[H]$. A \emph{route} $\tau =\tau_1 \tau_2 \ldots $ is a path on a connection graph with $\tau_i$ representing the label of the agent at position $i \in \mathbb{Z}_{\geq 0}$ on route $\tau$. The function ${\bm\tau}(l):S \rightarrow \mathbb{Z}_{>0} \cup \varnothing $ returns the position of agent $l$ on route $\tau$ if $l$ is on $\tau$ and returns $\varnothing$ otherwise.  Finally, the set of routes on the connection graph $\lambda{[k]}$ that start at agent $l$ is denoted by $Routes(\lambda{[k]},l)$ where $\forall \tau \in Routes(\lambda{[k]},l),\tau_1=l$. 

For notational simplicity, whenever there is no risk for confusion, we will drop the superscript $H$ from ${\bm \lambda}^{H}$, ${\bm x}^{H}$ and $x_l^{H}$ and use ${\bm \lambda}$, ${\bm x}$ and $x_l$, respectively. The horizon $H$ will be clear from context. We will use $[a,b]$ to denote the discrete time interval $[a,b] \cap \mathbb{Z}$ with $a,b \in \mathbb{Z}_{\geq 0}$.

\subsection{Spatio-Temporal Reach and Escape Logic (STREL)} \label{sec:prelim-strel}

STREL is a logic capable of describing complex behaviors in multi-agent teams. Formal definitions of the STREL syntax and semantics can be found in \cite{Bartocci2017}. Here, we give informal definitions for the syntax and the qualitative semantics. We provide, however, a formal definition of the quantitative semantics because we will refer to it later in the paper. Informally, STREL formulas are formed using atomic propositions $p$, which in this paper are attributes from the set $\mathcal{A}$, or predicates $\mu_{g(q_l[k])\sim r}$ defined over the dynamical states of the agents, where $g:\mathcal{Q}\rightarrow \mathbb{R}$, $r \in \mathbb{R}$, and 
$\sim \in \{\leq,>\}$; logical operators ($\neg,\vee, \wedge $); temporal operators: \emph{eventually} ($F_{[a,b]}$) and  \emph{always} ($G_{[a,b]}$) where $[a,b]$ is a discrete time interval with $a,b \in \mathbb{Z}_{\geq 0}$; and spatial operators: \emph{reach} ($\mathcal{R}_{\leq d}^f$), \emph{escape} ($\mathcal{E}^f_{>d}$) and \emph{surround} ($\odot^{f}_{\leq d}$), where $f$ is a distance function such as Euclidean distance $dist(l,l')$ or $hops(l,l')$, which returns the number of edges for the shortest route between $l,l'$, and $d$ is a scalar.

 The qualitative semantics define the satisfaction of a given STREL formula for $({\bm \lambda}, x_l)$ at time $k$. For instance, $({\bm \lambda}, x_l[k]) \models p$ ($\models$ reads ``satisfies") if the attribute of agent $l$ is $a_l = p$; $({\bm \lambda}, x_l[k]) \models \mu_{g(q_l[k])\sim r}$ if $g(q_l[k])\sim r$; $({\bm \lambda}, x_l[k]) \models F_{[a,b]} \varphi$ if $\exists k'\in [k+a,k+b]$ such that $({\bm \lambda}, x_l[k']) \models \varphi$; and $({\bm \lambda}, x_l[k]) \models G_{[a,b]} \varphi$ if  $\forall k'\in [k+a,k+b]:$  $({\bm \lambda}, x_l[k']) \models \varphi$. For the spatial operator \emph{reach}, $({\bm \lambda}, x_l[k]) \models \varphi_1 \mathcal{R}^{f}_{\leq d} \varphi_2$ if $\varphi_2$ is satisﬁed at agent $l'$ reachable from $l$ through a route $\tau$ with $\tau_1=l$ such that $f(l,l')\leq d$, and $\varphi_1$ is satisﬁed at $l$ and all the other agents between $l,l'$ on $\tau$ and we call such a route $\tau$ a \emph{satisfying route}. Similarly, $({\bm \lambda}, x_l[k]) \models \mathcal{E}^{f}_{> d} \varphi$ if there exists a route $\tau$ with $\tau_1=l$; and an agent $l'\in \tau$ such that $f(l,l')>d$, and $\varphi$ is satisﬁed at all agents $\tau_1\tau_2\ldots\tau_{{\bm\tau}(l')-1}$. The operator \emph{surround} expresses the notion of an agent with state that satisfies $\varphi_1$ being surrounded by agents with states satisfying $\varphi_2$ within a distance $f\leq d$. It can be derived from operators \emph{reach} and \emph{escape} as $\varphi_1 \odot^{f}_{\leq d} \varphi_2 = \varphi_1 \wedge \neg \left(\varphi_1 \mathcal{R}^{f}_{\leq d} \neg (\varphi_1 \vee \varphi_2)\right) \wedge \mathcal{E}^{f}_{ > d}  \varphi_1 \wedge \varphi_1 \mathcal{R}^{f}_{ \leq d} \varphi_2$.  
We note that we added the term ($\wedge \varphi_1 \mathcal{R}^{f}_{ d} \varphi_2$) to the definition provided in \cite{Bartocci2017} to avoid false satisfaction in the case of isolated agents that satisfy $\varphi_1$ while not being surrounded by agents satisfying $\varphi_2$ (see Ex.\ref{ex:reach_escape}). Additional operators can be defined or derived from the operators above and are not introduced here for brevity.

The \emph{time horizon} of a STREL formula $\varphi$ is denoted by $hrz(\varphi)$ and is defined as the smallest time point in the future for which the states of the agents are needed to determine the satisfaction of the formula. Hereafter, we assume that the horizon $H$ of $x_l^{H}$, ${\bm x}^{H}$ and ${\bm \lambda}^{H}$ is at least equal to $hrz(\varphi)$.

\begin{example} \label{ex:reach_escape} 
Consider the team of robotic agents from Ex.\ref{ex:example_state} (Fig.\ref{fig:example1}), the distance function $hops(l,l')$ and atomic propositions from the set of attributes $\mathcal{A}=\{endDevice,router,coordinator\}$. The formula $router \: \mathcal{R}^{hops}_{\leq 1} \: endDevice$ is satisﬁed at agent $2$ because there is a (satisfying) route $\tau = \tau_1 \tau_2$ with $\tau_1 = 2$ of type $router$ and  $\tau_2 = 3$ of type $endDevice$ with a distance of at most $1$ \emph{hop} from agent $2$. The formula $\mathcal{E}^{hops}_{> 2} \neg endDevice$ is satisﬁed at agent 5 because all of the agents in the route $\tau = \tau_1 \tau_2 \tau_3 = 5, 6, 3$ are $endDevices$. However, it is violated at agent $6$ because there satisfying routes with $hops(6,l')>2$. The formula $coordinator \odot^{hops}_{\leq 2} endDevice$ is satisfied at agent $1$ but violated at agents $7$ and $8$. Note that operator \emph{surround} as defined in \cite{Bartocci2017} would suggest that the same formula is satisfied at agents $7$ and $8$ despite them not being surrounded by agents of type $endDevice$.\demo
\end{example}

The quantitative valuation of a given STREL formula $\varphi$ is defined by a real-valued \emph{robustness function} $\rho_{_{}}$. The robustness $\rho_{_{}}$ of a STREL formula $\varphi$ with respect to $({\bm \lambda},x_l)$ at time $k$ is calculated recursively by \cite{Bartocci2017}:
\begin{subequations}\label{eq:strel-org}
\begin{gather}
        \rho_{_{}} ({\bm \lambda},x_l,p,k)  = \iota(p,x_l[k])\\
        \rho_{_{}} ({\bm \lambda},x_l,\mu,k)  = \iota(\mu,x_l[k])\\
        \rho_{_{}} ({\bm \lambda},x_l,\neg \varphi,k)  = - \rho_{_{}} ({\bm \lambda},x_l,\varphi,k)\\
        \begin{aligned}
        \rho_{_{}} ({\bm \lambda},x_l,\varphi_1 \vee \varphi_2 , k) = & \\ \max (\rho_{_{}} ({\bm \lambda},x_l,\varphi_1,&k), \rho_{_{}} ({\bm \lambda},x_l,\varphi_2,k))
        \end{aligned}\\
    \begin{aligned}
        \rho_{_{}} ({\bm \lambda},x_l,F_{[a,b]} \varphi,k) & = \\ \max_{k'\in [k+a,k+b]} \rho_{_{}}&({\bm \lambda},x_l,\varphi,k')
    \end{aligned}\\
    \begin{aligned}
        \rho_{_{}} ({\bm \lambda},x_l,G_{[a,b]} \varphi, k) & =\\ \min_{k'\in [k+a,k+b]} &\rho_{_{}}  ({\bm \lambda},x_l,\varphi,k')
    \end{aligned}\\
    \begin{aligned}
        \rho_{_{}} ({\bm \lambda},x_l,\varphi_1 \mathcal{R}_{\leq d}^f \varphi_2,k) & =\max_{\tau \in Routes(\lambda[k],l)}\max_{l'\in \tau:f(l,l')\leq d} \\
        \bigg[\min(\rho_{_{}}({\bm \lambda},x_{l'},\varphi_2,k)&;
        \min_{j<{\bm\tau}(l')}\rho_{_{}}({\bm \lambda},x_{\tau_j},\varphi_1,k))\bigg]
    \end{aligned}\\
    \begin{aligned}
        \rho_{_{}} ({\bm \lambda},x_l,\mathcal{E}^f_{>d} \varphi,k) & =
        \max_{\tau \in Routes(\lambda[k],l)}\max_{l'\in \tau:f(l,l')\leq d}\\ 
        & \qquad \min_{j<{\bm\tau}(l')}\rho_{_{}}\left({\bm \lambda},x_{\tau_j},\varphi,k\right)
    \end{aligned} \label{eq:strel-org-escape}
\end{gather}
\end{subequations}
where $\iota$ is the \emph{signal interpretation function} defined for atomic propositions and predicates by
\begin{equation}
\iota(p,x_l[k]) =\begin{cases}
		            	\rho_{max}, & \text{if $a_l=p$}\\
                     -\rho_{max}, & \text{otherwise}
		            \end{cases}\\
\end{equation}
\begin{equation}
\iota(\mu_{g(q_l[k])\sim r},x_l[k]) = \begin{cases}
			        g(q_l[k])-r, & \text{if $\sim =\: >$}\\
                     r-g(q_l[k]), & \text{if $\sim =\: \leq$}
		             \end{cases}
\end{equation}

\begin{theorem}\cite{Bartocci2017}
The STREL robustness defined by \eqref{eq:strel-org} is sound, i.e., positive robustness indicates satisfaction of the specification, and negative robustness indicates violation of the specification. 
\end{theorem}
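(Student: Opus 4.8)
The plan is to prove the two implications exactly as stated by structural induction on the formula $\varphi$: for every $(\bm\lambda, x_l)$ and every time $k$, if $\rho(\bm\lambda, x_l, \varphi, k) > 0$ then $(\bm\lambda, x_l[k]) \models \varphi$, and if $\rho(\bm\lambda, x_l, \varphi, k) < 0$ then $(\bm\lambda, x_l[k]) \not\models \varphi$. Because the temporal and spatial operators recurse to the robustness of subformulas evaluated at \emph{other} times $k'$ and at \emph{other} agents $l'$, the induction hypothesis must be stated uniformly over all agents and all time points, so that it is available at every pair $(x_{l'}, k')$ reached by the recursion. The boundary value $\rho = 0$ need not be analyzed, since the theorem only asserts the signed implications.

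For the base cases, the result follows directly from the interpretation function $\iota$. For an atomic proposition $p$ we have $\rho = \iota(p, x_l[k]) \in \{\rho_{max}, -\rho_{max}\}$, which is positive exactly when $a_l = p$, i.e.\ exactly when $(\bm\lambda, x_l[k]) \models p$, and negative otherwise; for a predicate $\mu_{g(q_l[k])\sim r}$ the sign of $\iota$ coincides with the truth of $g(q_l[k]) \sim r$ in both the $>$ and $\leq$ cases. The Boolean and temporal inductive steps are routine and mirror the standard STL soundness argument. For $\neg\varphi$ the sign flips, so a positive (resp.\ negative) robustness of $\neg\varphi$ means a negative (resp.\ positive) robustness of $\varphi$, and the hypothesis yields violation (resp.\ satisfaction) of $\varphi$, hence satisfaction (resp.\ violation) of $\neg\varphi$. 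For $\varphi_1 \vee \varphi_2$ and for $F_{[a,b]}\varphi$ the robustness is a maximum, so a positive value forces some disjunct or some time point to be positive, matching the existential reading, while a negative value forces all of them negative; $G_{[a,b]}\varphi$ is the dual through the minimum.

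The substantive cases are the spatial operators, where the key is to align the nested quantifier structure of the quantitative definition with that of the qualitative semantics. For reach, $\rho(\varphi_1 \mathcal{R}^f_{\leq d}\varphi_2, k)$ is a maximum over routes $\tau$ and over reachable agents $l'$ with $f(l,l') \leq d$ of the minimum of $\rho(\varphi_2)$ at $l'$ and of $\rho(\varphi_1)$ at each intermediate agent $\tau_j$, $j < \bm\tau(l')$. If this value is positive, then some route $\tau$ and some $l'$ realize a positive inner minimum, so by the induction hypothesis $\varphi_2$ holds at $l'$ and $\varphi_1$ holds at $l$ and all agents preceding $l'$ on $\tau$, which is exactly the witness required for qualitative satisfaction; if it is negative, then along every route and for every candidate $l'$ the inner minimum is negative, so either $\varphi_2$ fails at $l'$ or some intermediate $\varphi_1$ fails, and no satisfying route exists. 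The escape operator $\mathcal{E}^f_{>d}\varphi$ is treated analogously, matching its max-over-routes / min-over-prefix structure to the existence of an escaping route. The derived operator $\odot^f_{\leq d}$ then needs no separate argument, as its soundness follows by applying the cases already established to its definition in terms of reach, escape, and the Boolean operators.

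The main obstacle I anticipate is precisely this bookkeeping for the spatial operators: one must verify that the distance constraint delimiting the range of $l'$ in the robustness maximum coincides with the reachability or escape condition in the qualitative semantics, and that the prefix over which the intermediate $\min$ is taken is exactly the set of agents on $\tau$ strictly before $l'$. Any misalignment between the quantifier ranges in the two semantics, for instance between the $\leq d$ and $> d$ conditions or in which endpoint agents are included in the intermediate conjunction, would break the correspondence, so this is the step that must be carried out with the greatest care.
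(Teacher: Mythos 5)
The paper itself contains no proof of this theorem: it is imported by citation from \cite{Bartocci2017}, and the paper's only soundness argument (the sketch for the counting semantics $\rho_c$) uses this statement as a black box. Your structural-induction outline is the standard argument for results of this kind and is essentially what the cited reference does: a uniform induction hypothesis over all agents and all time points, sign analysis of $\iota$ for the base cases, max/min reasoning for the Boolean and temporal operators, and witness extraction for the spatial operators. Those parts of your sketch are correct, and you are right that the boundary case $\rho = 0$ needs no treatment, since the theorem asserts only the two strict-sign implications.

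There is, however, one concrete gap, and it sits exactly where you predicted trouble but declined to do the work. You dispose of escape with ``treated analogously,'' but against the paper's equations as literally written the escape case is \emph{not} analogous and in fact fails: equation \eqref{eq:strel-org-escape} ranges its inner maximum over agents $l' \in \tau$ with $f(l,l') \leq d$, while the qualitative semantics of $\mathcal{E}^f_{>d}\varphi$ requires a witness $l'$ with $f(l,l') > d$. A positive value of the transcribed robustness therefore yields a witness satisfying the wrong distance constraint, and no qualitative satisfaction can be concluded; the implication you need simply does not hold for the formula as printed. This is a transcription error (the constraint in \eqref{eq:strel-org-escape} should read $f(l,l') > d$, matching the subscript of $\mathcal{E}^f_{>d}$), and a complete proof must say so explicitly and argue against the corrected definition. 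Since your own stated criterion was that any misalignment between the $\leq d$ and $> d$ quantifier ranges ``would break the correspondence,'' your sketch as written leaves the escape case unproved; flagging a risk is not the same as discharging it, and here the risk is realized.
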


\subsection{Smooth Robustness Approximation}
\label{sec:prelim-smooth}
The $\max$ and $\min$ functions can be approximated by \cite{Pant2017}:

\begin{equation*} 
\begin{array}{l}
\max(a_1,...,a_n)\approx {\widetilde{\max}}(a_1,...,a_n) = \frac{1}{\beta}
\ln \big (\sum_{i=1}^n e^{\beta a_i} \big),\\
\min(a_1,...,a_n) \approx {\widetilde{\min}}(a_1,...,a_n)= -{\widetilde{\max}}(-a_1,...,-a_n)
\end{array}
\end{equation*}
and the approximation error is bounded by $\epsilon_{\beta}$: 
\begin{equation}\label{eq:approx_error}
\begin{array}{l}
0\leq \max(a_1,..,a_n) - {\widetilde{\max}}(a_1,..,a_n)\leq \frac{\ln(n)}{\beta}= \epsilon_{\beta} \\ 
\end{array}
\end{equation} 
with the maximum error when $a_1=a_
2=\ldots =a_n$. The approximation error approaches $0$ when $\beta \rightarrow\infty$.

\section{Problem Formulation and Approach} \label{sec:problem}

Consider a team of $N$ agents labeled from the set $S = \{1,\ldots,N\}$ with dynamics \eqref{dyn} and a cost function $J(u[k],x[k+1])$, which is the cost of ending in state $x[k+1]$ by applying the control inputs $u[k]$ at time $k$. Assume that spatio-temporal requirements are specified by a STREL formula $\varphi$ interpreted over the state run of the team $\mathbf{x}^H$, and $H = hrz(\varphi)$ is the planning horizon. The goal is to find a control sequence for the multi-agent team that maximizes the robustness of the STREL formula and minimizes the cost function.

\begin{problem}\label{problem:main}[Control Synthesis] Given a multi-agent team with dynamics (\ref{dyn}), STREL formula $\varphi$, initial connection graph $\lambda [0]$, initial state of the system $x[0]$, planning horizon $H = hrz(\varphi)$, and cost function $J$; find an optimal control sequence $\mathbf{u}^{*H-1}$ that maximizes the robustness score and minimizes the cost, i.e.: 
\begin{align}\label{eq:problem1}
\begin{aligned}
  \mathbf{u}^{*H-1} & = \argmaxA_{\mathbf{u}^{H-1}}  \rho_c\left({\bm \lambda},{\bm x},\varphi,0 \right)
    - \gamma \sum_{k=0}^{{H-1}}J\left(u[k],x[k+1]\right) \\
        &  s.t. \\
                 & q_{l}{[k+1]} = f(q_l[k],u_l[k]),\quad \forall l \in S, \forall k \in [0,H-1]\\
                 & u_l{[k]} \in \mathcal{U},\qquad \qquad \qquad \forall l \in S,  \forall k \in [0,H-1]\\
\end{aligned}
\end{align}
where $\gamma >0$ is a trade-off coefficient and $\rho_c\left({\bm \lambda},{\bm x},\varphi,0 \right)$ is the robustness function for the team at time $0$.
\end{problem}

To solve Problem \ref{problem:main}, we first develop sound \emph{counting STREL quantitative semantics} with robustness $\rho_c$, which has spatial counting and allows to optimize the spatial configuration for connectivity (Sec.~\ref{sec:newstrel}). We use a smooth version of the robustness $\Tilde{\rho}_c$ in the objective function, and employ a combination of heuristic and gradient-based optimization algorithms (Sec.~\ref{sec:solOPT}) to find a control sequence that maximizes the robustness of the STREL formula and minimizes the cost function. In addition, as the execution time for the optimization might not meet real-time control requirements, we propose training a RNN to learn controllers from state-control trajectories generated by solving the optimization problem with different initializations (Sec.~\ref{sec:learning_control}). The trained RNN is then used to predict control inputs at each time.

\section{STREL Counting Quantitative Semantics}
\label{sec:newstrel}

To motivate the new proposed STREL quantitative semantics, we start by discussing the limitations of the original semantics defined by the robustness function in \eqref{eq:strel-org}.  

First, the robustness does not depend on the distance between the agents. Consider the multi-agent team from Ex.\ref{ex:reach_escape}. The STREL formula $endDevice \: \mathcal{R}^{hops}_{\leq 2} router$ will have the same robustness score for agents $3,5$ and $6$ even though their distances to the nearest $router$ (agent $2$) are not the same: $hops(3,2)=hops(5,2)=1<hops(6,2)=2$. In practice, connectivity between robotic agents in a networked system depends on the spatial configuration and it is often the case that the smaller the distance between agents, the better the connectivity.

Second, the robustness from \eqref{eq:strel-org} does not have a notion of spatial counting. Consider again the team from Ex.\ref{ex:reach_escape} and note that the formula $router \: \mathcal{R}^{hops}_{\leq 1} endDevice$ will have the same robustness score for agents $2$ and $4$, even though agent $2$ has two satisfying routes $2,3$ and $2,5$, compared to one satisfying route $4,3$ at agent $4$. In practice, it is beneficial to maximize the number of \emph{satisfying routes} for a given formula at a given agent. Take, for instance, the formula ($robot \odot^f_{\leq d} target$), maximizing the number of \emph{satisfying routes} results in maximizing the number of agents of type $robot$ surrounding $target$. 

Third, the STREL robustness from \eqref{eq:strel-org} is defined only at the level of individual agents. We need a way to compute the robustness score for the team, which takes into account the number of agents that satisfy/violate a given formula. 

The new STREL quantitative semantics addresses these limitations and differs from the original semantics defined in~\cite{Bartocci2017} in three ways. First, the proposed semantics depends on the distances among agents. Second, it performs spatial counting of satisfying/violating routes at individual agents. Third, by defining spatial counting for satisfying/violating agents, 
it captures the satisfaction of the specification by the team. 

\subsection{Optimizing the Spatial Configuration for Connectivity}

To optimize the spatial configuration of a multi-agent team for connectivity, we require the robustness score to depend on the distance between agents. To this goal, we introduce a function $\sigma_{dist}$, which depends on the distance between agents $f(l,l')$ and a scalar $d$. Specifically, $\sigma_{dist}$ is a sigmoid function that take values between $[-1,1]$ depending on the ratio $d_{norm}= \frac{f(l,l')}{d}$ (Fig.\ref{fig:sigma_dist}) and is defined for $f(l,l')\leq d$ and $f(l,l')>d$ as follows:
\begin{align} 
\sigma_{dist}^{\leq}(d_{norm}) &= -\tanh(k_{d}(d_{norm}-1)) \label{eq:sigma_dist1}\\
\sigma_{dist}^{>}(d_{norm})    &= \tanh(k_{d}(d_{norm}-1)) \label{eq:sigma_dist2}
\end{align}
where $k_{d}$ is a hyperparameter that determines how fast $\sigma_{dist}$ changes its value.  

When computing the robustness, we take $\min(\sigma_{dist},\rho_c)$ (see Sec.~\ref{sec:newstrel}). Notice that $\sigma_{dist}$ allows the robustness score to vary beyond the distance constant $d$ as defined by $f(l,l') \sim d,\: \sim =\{\leq,>\}$ as opposed to the original definition in \eqref{eq:strel-org}.

\subsection{Spatial Counting for Routes}\label{sec:sigma_routes}

Consider the robustness function of the spatial operator $escape$ defined by \eqref{eq:strel-org-escape} and define the robustness of a given route $\tau$ as $\rho_{\tau} := \max_{l'\in \tau:f(l,l')\leq d} \min_{j<{\bm\tau}(l')}\rho_{_{}}\left({\bm \lambda},x_{\tau_j},\varphi,k\right)$. Notice that  $\rho_{_{}}\left({\bm \lambda},x_l,\varphi,k\right) = \max_{\tau \in Routes(\lambda[k],l) } \rho_{\tau}, $, which means that it is enough to have one satisfying route $\tau \in Routes(\lambda[k],l)$ (with $\rho_{\tau}>0$) to satisfy formula $\varphi$ and the robustness score does not vary depending on the number of routes that satisfy/violate $\varphi$. We address this limitation by introducing an additional function $\sigma_{routes}$, which depends on the number of routes that satisfy/violate a given formula. 

Formally, let $R^+, R^- \in \mathbb{N}$ be the number of satisfying/ violating routes for a given spatial operator, respectively. We define the function $\sigma_{routes}$ (Fig.\ref{fig:sigma_routes}) 
\begin{multline} \label{eq:sigma_routes}
    \sigma_{routes}(R^+,R^-) = \max\big(\frac{1}{1+e^{k_{R} R^{-})}},\\ \frac{1}{1+e^{-k_{R} (R^{+}-R^{-}) }} \big)
\end{multline}
where $k_{R}$ is a hyperparameter that determines how fast $\sigma_{routes}$ changes its value. 

\begin{figure*}
     \centering
     \begin{subfigure}[b]{0.42\textwidth}
         \includegraphics[width=\textwidth]{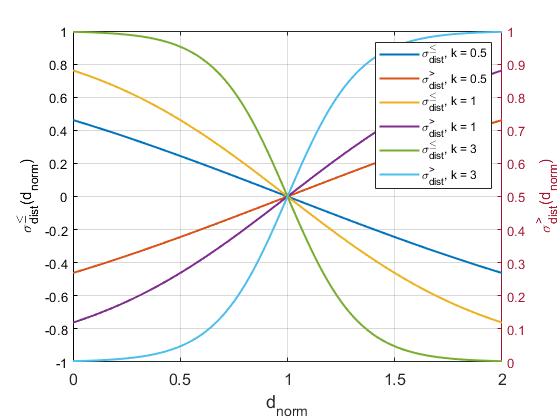}
    \caption{$\sigma_{dist}^{\leq}$ and $\sigma_{dist}^{>}$}
    \label{fig:sigma_dist}
     \end{subfigure}
     \hfill
     \begin{subfigure}[b]{0.42\textwidth}
         \includegraphics[width=\textwidth]{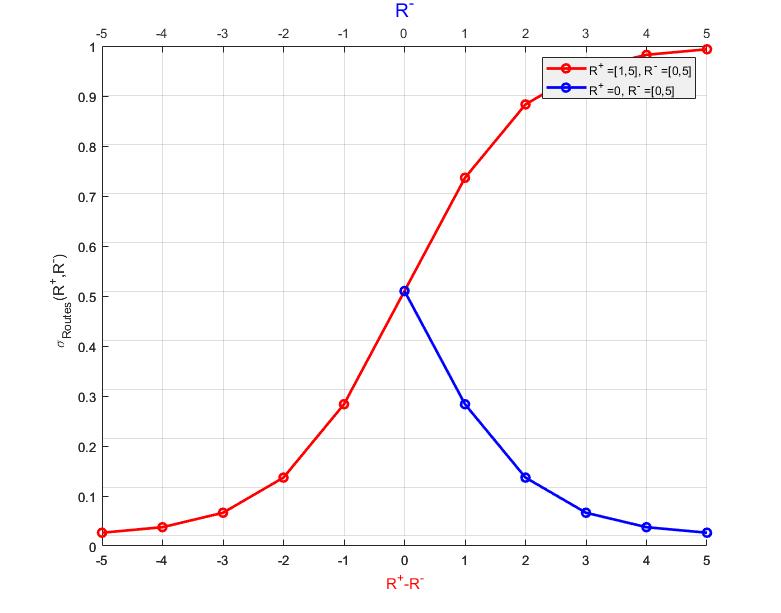}
    \caption{$\sigma_{routes}(R^+,R^-)$}
      \label{fig:sigma_routes}
     \end{subfigure}
    \caption{Behavior of the functions $\sigma_{dist}$ and $\sigma_{routes}$}
    \label{fig:sigma}
\end{figure*}

\subsection{Spatial Counting for Agents}

As mentioned above, the STREL semantics is defined at the level of individual agents. 
A na\"{i}ve method to compute the robustness of the team of $N$ agents is to consider the minimum of the robustness of individual agents. 
\begin{equation}\label{eq:rho_all_naive}
    \rho_{_{}}\left({\bm \lambda},{\bm x}, \varphi,k\right) = \min_{l \in S} \rho_{_{}}\left({\bm \lambda},x_l, \varphi,k\right)
\end{equation}
In this case, the robustness score for the team will reflect the worst robustness score among individual agents and will not depend on the number of agents that satisfy/violate the formula $\varphi$. Following a similar approach to route counting in Sec.~\ref{sec:sigma_routes}, we introduce $\sigma_{ag}(Ag^+,Ag^-)$, which allows for varying the robustness score depending on the number of agents that satisfy/violate the specification.

Let $Ag^{+},Ag^{-}$ be the number of agents that satisfy and violate the specification, respectively. Then $\sigma_{ag}$ is given by:
\begin{multline}\label{eq:sigma_agents}
    \sigma_{ag}(Ag^+,Ag^-) = \\ 
    \max\big( \frac{1}{1+e^{-k_{ag}Ag^{-}}},    \frac{1}{1+e^{k_{ag}(Ag^{-}-Ag^{+})}} \big)
\end{multline}

\subsection{Counting Robustness for STREL}

We are now ready to define the proposed counting robustness for the STREL spatial operators. 
\begin{multline*}
            \rho_c\left({\bm \lambda},x_l,\varphi_1 \mathcal{R}^{f}_{\leq d} \varphi_2,k\right)  =\\        \min\bigg[\sigma_{routes}(R^+,R^-) \max_{\tau \in Routes} \max_{l'\in \tau:dist(l')\leq d} \\\min\left(\rho_c\left({\bm \lambda},x_{l'},\varphi_2,k\right); \min_{j<{\bm\tau}(l')}\rho_c\left({\bm \lambda},x_{\tau_j},\varphi_1,k\right)\right)\\ ;\sigma_{dist}^{\leq}\left(d_{norm}\right)\bigg]
\end{multline*}
\begin{multline}\label{eq: strel-ro}
            \rho_c\left({\bm \lambda},x_l, \mathcal{E}^{f}_{> d} \varphi,k\right)  =  \min\bigg[ \sigma_{routes}(R^+,R^-) \\ \max_{\tau \in Routes} \max_{l'\in \tau:> d}\min_{j<{\bm\tau}(l')}\rho_c\left({\bm \lambda},x_{\tau_j},\varphi,k\right))\\ ; \sigma_{dist}^{>}\left(d_{norm}\right)\bigg]
\end{multline}

The robustness for the logical and temporal operators is the same as the one from \eqref{eq:strel-org}.

The robustness for the team at time $k$ is given by:  
\begin{multline} \label{eq: strel-ro-all}
    \rho_c({\bm \lambda},{\bm x},\varphi,k) = \sigma_{ag}(Ag^+,Ag^-) \\ \min_{l\in \{1,\ldots,N\}}(\rho_c({\bm \lambda},x_l,\varphi,k)).
\end{multline}

\begin{theorem}
The counting robustness of STREL defined by \eqref{eq: strel-ro} and \eqref{eq: strel-ro-all} is sound.
\end{theorem}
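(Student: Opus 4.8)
The plan is to prove soundness by structural induction on $\varphi$, tracking only the \emph{sign} of $\rho_c$ and reducing each case to the soundness of the original robustness \eqref{eq:strel-org}. The claim to establish for every subformula $\psi$ and agent $l$ is that $\rho_c({\bm \lambda},x_l,\psi,k)>0$ implies $({\bm \lambda},x_l[k])\models\psi$ and $\rho_c({\bm \lambda},x_l,\psi,k)<0$ implies $({\bm \lambda},x_l[k])\not\models\psi$. For atomic propositions $p$ and predicates $\mu$, $\rho_c$ coincides with the original robustness through the interpretation function $\iota$, so the base case is immediate. For the Boolean ($\neg,\vee,\wedge$) and temporal ($F_{[a,b]},G_{[a,b]}$) operators, $\rho_c$ uses exactly the same negation/$\max$/$\min$ recursion as \eqref{eq:strel-org}; since these operations preserve the sign-to-satisfaction correspondence, the inductive hypothesis passes through them unchanged.

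The interesting cases are the spatial operators and the team aggregation, where $\rho_c$ departs from \eqref{eq:strel-org} through the factors $\sigma_{routes}$ and $\sigma_{ag}$ and through a $\min$ with $\sigma_{dist}$. The first fact I would prove is that $\sigma_{routes}(R^+,R^-)$ and $\sigma_{ag}(Ag^+,Ag^-)$ are \emph{strictly positive}: each is the $\max$ of two logistic terms lying in $(0,1)$, hence each lies in $(0,1)$. Multiplying by a strictly positive factor preserves sign, so $\sigma_{routes}$ does not alter the sign of the inner $\max_{\tau}\max_{l'}\min(\cdot)$ expression, which is structurally identical to the reach/escape robustness of \eqref{eq:strel-org}; by the inductive hypothesis its sign is positive exactly when a qualitatively satisfying route exists. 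Likewise, in \eqref{eq: strel-ro-all} the positivity of $\sigma_{ag}$ forces $\mathrm{sign}\,\rho_c({\bm \lambda},{\bm x},\varphi,k)=\mathrm{sign}\,\min_{l}\rho_c({\bm \lambda},x_l,\varphi,k)$, which is positive iff all agents satisfy $\varphi$ and negative iff some agent violates it, giving soundness for the team.

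The main obstacle is the $\min$ with $\sigma_{dist}$ in \eqref{eq: strel-ro}, since $\sigma_{dist}$ may be negative and thereby opens a new path to negative robustness not present in \eqref{eq:strel-org}. The key is that the sign of $\sigma_{dist}$ flips precisely at the qualitative distance threshold: from \eqref{eq:sigma_dist1}, $\sigma_{dist}^{\leq}(d_{norm})>0 \iff d_{norm}<1 \iff f(l,l')<d$, and from \eqref{eq:sigma_dist2}, $\sigma_{dist}^{>}(d_{norm})>0 \iff f(l,l')>d$. Because the inner $\max$ already ranges only over agents satisfying the hard constraint ($f(l,l')\leq d$ for reach, $f(l,l')>d$ for escape), evaluating $d_{norm}$ at the selected optimal agent makes $\sigma_{dist}$ non-negative whenever the inner expression is positive; hence the outer $\min$ can only shrink a positive value toward (but not across) zero, and it stays negative whenever either the inner term is non-positive or the distance constraint is strictly violated. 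Thus $\rho_c>0$ still implies qualitative satisfaction and $\rho_c<0$ implies violation, the sole uncovered case being the boundary $f(l,l')=d$, where $\rho_c=0$ and satisfaction is left undefined, exactly as in the original soundness statement. Assembling the base, Boolean/temporal, spatial, and team cases closes the induction.
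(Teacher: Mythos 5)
Your proof is correct and takes essentially the same approach as the paper's (sketched) proof: soundness as sign-consistency with the original robustness, established by observing that $\sigma_{routes}$ and $\sigma_{ag}$ are strictly positive multiplicative factors and that $\sigma_{dist}$ is negative only when the distance predicate $f(l,l') \sim d$ is violated, so taking the $\min$ with it cannot flip a correct sign. Your explicit structural induction and the treatment of the boundary case $f(l,l')=d$ merely flesh out details the paper omits for space.
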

\begin{proof}[Proof]
[sketch] A formal proof is omitted due to space constraints.
Informally, soundness can be viewed as a sign consistency between the counting robustness $\rho_c$ and the original robustness in \cite{Bartocci2017} $\rho$. We show that the three functions $\sigma_{dist}$, $\sigma_{routes}$, $\sigma_{ag}$ introduced to $\rho$ do not affect the sign of the robustness, and thus show that $\rho_c$ is sound.

First, $\sigma_{routes}$ and $\sigma_{ag}$ are positive and are multiplied by the robustness function provided by the original semantics and, thus, do not change the sign of the robustness score.  
Second, $\sigma_{dist}$ changes in the range $[-1,1]$ and it is negative only when the distance predicate $f(l,l') \sim d$ is violated. Since we take the minimum between the robustness function and $\sigma_{dist}$, the robustness function will still give positive values for satisfaction and negative values for violation as before. Thus, the soundness of the proposed counting robustness follows from the soundness of the original robustness. 
\end{proof}

\section{Control Synthesis for STREL Specifications}
\label{sec:solOPT}

Problem \ref{problem:main} is a constrained non-linear optimization problem. The $max$ and $min$ functions in the semantics render the objective function non-differentiable. We modify \eqref{eq:problem1} by replacing the counting robustness $\rho_c$ with its smooth version of proposed smooth robustness $\Tilde{\rho_c}$ that we obtain by replacing the non-differentiable terms ($min/max$) with their smooth approximations described in Sec.~\ref{sec:prelim-smooth}.

We solve the new problem by employing a two-stage hybrid optimization method that utilizes a combination of heuristic and gradient-based optimization algorithms. In stage I, we explore the search space using a heuristic algorithm to find a good candidate solution. In stage II, we initialize a gradient-based algorithm with the best candidate solution found in stage I. Although many heuristic and gradient-based algorithms can be used for this approach, in this paper we use Particle Swarm Optimization (PSO)~\cite{kennedy1995particle} and Sequential Quadratic Programming (SQP)~\cite{polak2012optimization}, respectively. We name this two-stage optimization approach PSO+QSP. A quantitative comparison of the performance of SQP, PSO and PSO+SQP can be found in Sec.~\ref{sec:CS} (Tab.~\ref{tb:opt_CS}).

\section{Learning RNN-Based Controllers}
\label{sec:learning_control}

As already mentioned, solving the control synthesis problem by optimization can be expensive, and not feasible for real-time implementation. To address this, we propose to train a RNN using data obtained from off-line optimization, and then use it to generate the control at a given state. 

\textbf{Dataset generation.} Given a multi-agent team as described in Sec.~\ref{sec:prelim_Dyn}, a STREL formula $\varphi$, a planning horizon $H \geq hrz(\varphi)$, a set of $M$ initial team states $\{ x[0]^{[1]}, \ldots, x[0]^{[M]} \}$ and their corresponding initial communication graphs $\{ \lambda[0]^{[1]}, \ldots, \lambda[0]^{[M]} \}$, we generate a dataset $D$ by solving the control synthesis problem described in Sec.~\ref{sec:solOPT} and choosing $m \leq M$ state-control trajectories with robustness above a threshold $\epsilon_{min}$, i.e $D =\{\mathbf{x}_{[j]}^H,\mathbf{u}^{H[j]}|\Tilde{\rho_c}^{[j]}\geq \epsilon_{min} \}$, where $\epsilon_{min} \geq \epsilon_{\beta}$ is the robustness margin used to account for the approximation error $\epsilon_{\beta}$. 

\textbf{RNN implementation.} Due to the temporal operators, the satisfaction of STREL formulas is history-dependent. In other words, the control at each time step is, in general, dependent on the current state and past states $u[k]=g(x[0],..,x[k])$. For this reason, we choose Recurrent Neural Networks (RNN), which are neural networks with memory. To implement the RNN, we use a Long Short Term Memory (LSTM) Network~\cite{liu2020recurrent,Georgios}. LSTM has feedback channels and memory cells that can manage long-term dependencies by passing the history-dependence as hidden states. The function $g$ can be approximated as follows:
\begin{equation}\label{eq:RNN}
\begin{aligned}
    h{[k]}  &= \mathcal{R}(X[k],h[k-1],W_1)\\
    \hat{u}[k]    &= \mathcal{N}(h[k], W_2)
\end{aligned}
\end{equation}
where $W_1,W_2$ are the weight matrices of the RNN, $h[k]$ is the hidden state at time step $k$ and $\hat{u}[k]$ is the predicted control at $k$. The network is trained to minimize the error between the predicted control and the optimized control given in the dataset:
\begin{equation}
   \min_{W_1,W_2} \sum_{D} \sum_{k=0}^{H-1}\norm{ u[k]-\hat{u}[k]}^2.
\end{equation}

\begin{figure*}
     \centering
     \begin{subfigure}[b]{0.24\textwidth}
         \centering
         \includegraphics[width=\textwidth]{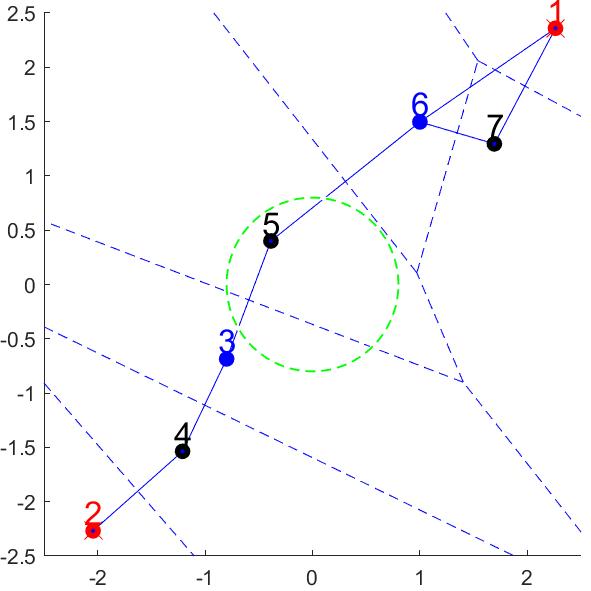}
         \caption{$k=0$}
         \label{fig:rnn0}
     \end{subfigure}
    \hfill
     \begin{subfigure}[b]{0.24\textwidth}
         \centering
         \includegraphics[width=\textwidth]{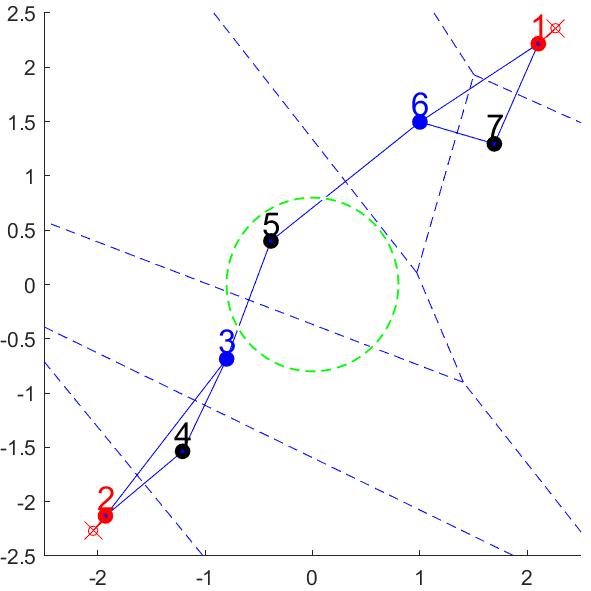}
         \caption{$k=1$}
         \label{fig:rnn3}
     \end{subfigure}
     \hfill
     \begin{subfigure}[b]{0.24\textwidth}
         \centering
         \includegraphics[width=\textwidth]{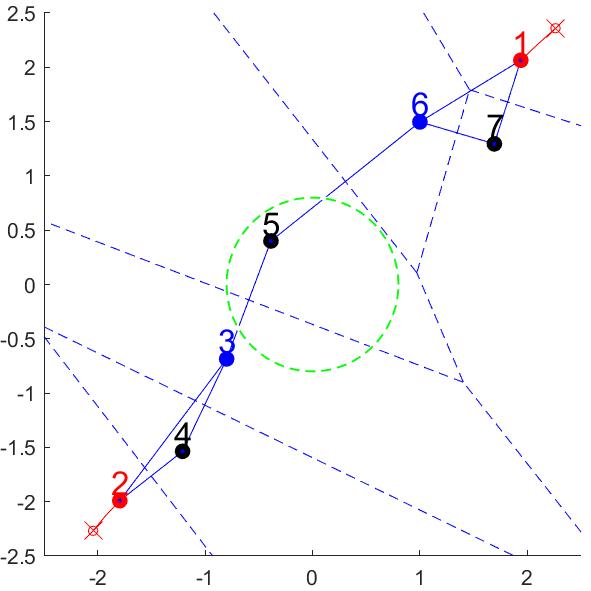}
         \caption{$k=2$}
         \label{fig:rnn5}
     \end{subfigure}
          \hfill
     \begin{subfigure}[b]{0.24\textwidth}
         \centering
         \includegraphics[width=\textwidth]{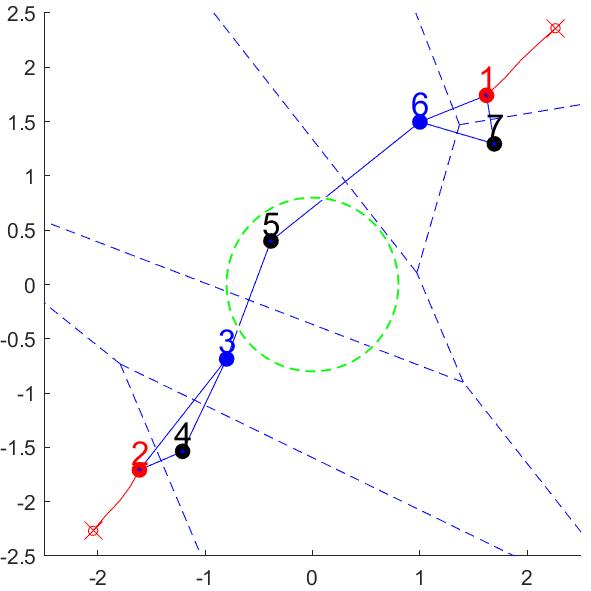}
         \caption{$k=4$}
         \label{fig:rnn7}
     \end{subfigure}
     \hfill
     \begin{subfigure}[b]{0.24\textwidth}
         \centering
         \includegraphics[width=\textwidth]{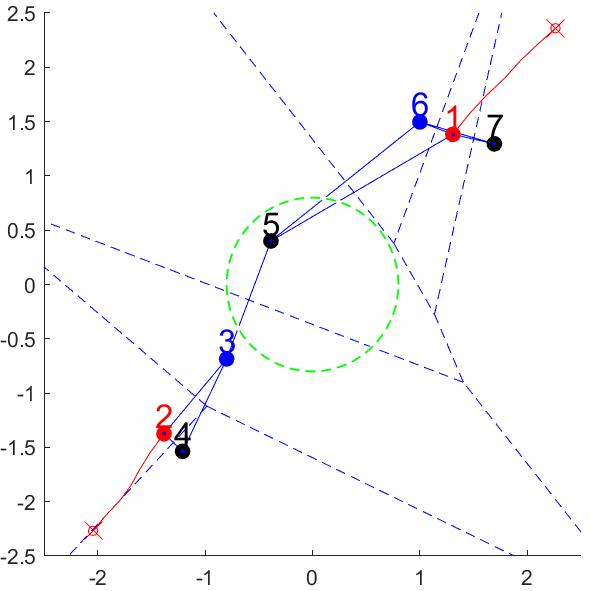}
         \caption{$k=6$}
         \label{fig:rnn8}
     \end{subfigure}
     \hfill
     \begin{subfigure}[b]{0.24\textwidth}
         \centering
         \includegraphics[width=\textwidth]{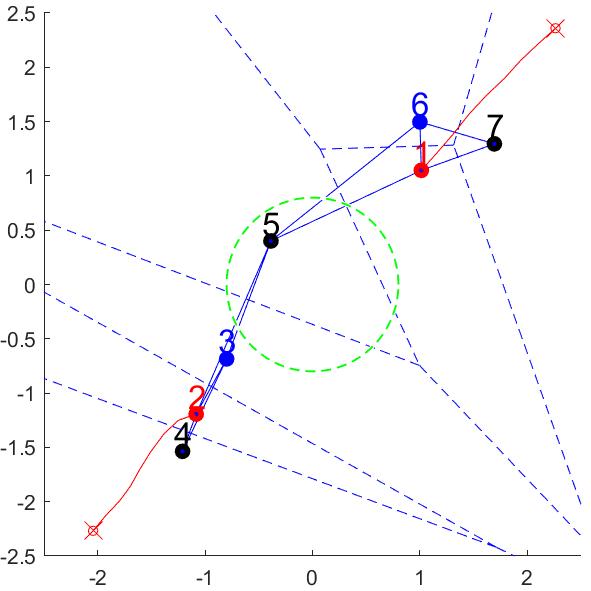}
         \caption{$k=8$}
         \label{fig:rnn10}
     \end{subfigure}
          \hfill
     \begin{subfigure}[b]{0.24\textwidth}
         \centering
         \includegraphics[width=\textwidth]{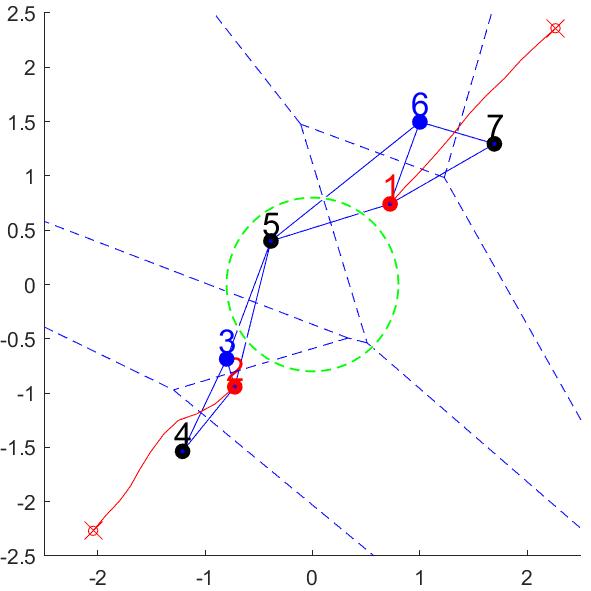}
         \caption{$k=10$}
         \label{fig:rnn11}
     \end{subfigure}
               \hfill
     \begin{subfigure}[b]{0.24\textwidth}
         \centering
         \includegraphics[width=\textwidth]{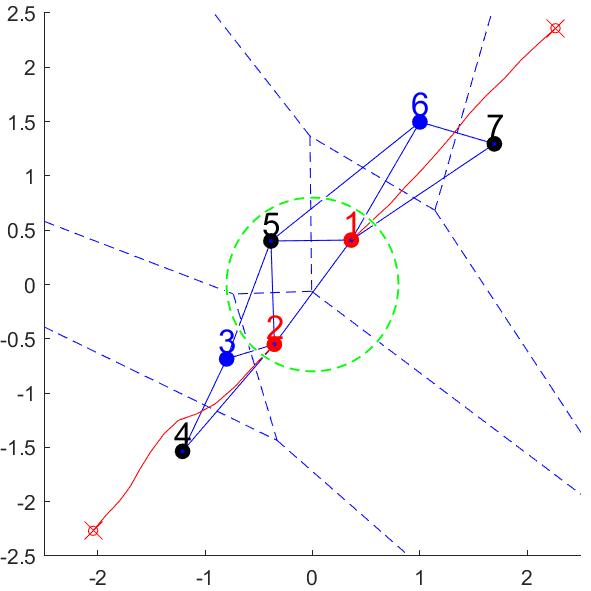}
         \caption{$k=12$}
         \label{fig:rnn13}
     \end{subfigure}
        \caption{Snapshots of the multi-agent team at different times. The agents are represented by labeled colored disks where labels are taken from the set $S=\{1,\ldots,7\}$ and colors correspond to types of agents: {\color{red} $endDevice$ (red)}{\color{black}, $router$ (black)}, and {\color{blue} $coordinator$ (blue)}. 
        The red solid lines represent the trajectories generated by agents $1$,$2$ using control inputs from the RNN-based controller. The solid and dashed blue lines represent the connection and Voronoi graphs of the team, respectively.}
        \label{fig:rnns}
\end{figure*}

\section{Case Study: Networked Robotic Agents}
\label{sec:CS}

In this section, we demonstrate the efficacy of our proposed framework with a case study. First, we solve the control synthesis problem for a multi-agent team and a given STREL formula off-line to generate state-control trajectories using the optimization approach described in Sec.~\ref{sec:solOPT}. Then, we use the state-control trajectories that satisfy the STREL formula to train a RNN to predict control inputs for the team at each time step (Sec. \ref{sec:learning_control}). We compare the trajectories obtained using RNN with those obtained by optimization. We also compare the results of optimization using different solvers, such as PSO, SQP, and PSO+SQP (Sec.~\ref{sec:solOPT}); and compare the performance of PSO under the original semantics \eqref{eq:strel-org}, \eqref{eq:rho_all_naive} and our proposed semantics \eqref{eq: strel-ro}, \eqref{eq: strel-ro-all}.

All the computations described in this section were performed on a PC with a Core i7 CPU @350GHz with 80 GB RAM. For optimization, we used a customized Particle Swarm Optimization (PSO) with $64$ particles and a MATLAB built-in implementation of SQP. To implement the RNN, we used the \emph{Python} package \emph{Pytorch}.

\textbf{System description} Consider a team of $N=7$ robotic agents (Fig.\ref{fig:rnns}) labeled from the set $S=\{1,2,\ldots,7\}$ in a 2D Euclidean space. The state of agent $l$ at time step $k$ is defined by $x_l[k]= (q_l[k],a_l)$ where $q_l[k] \in \mathbb{R}^2$ is the position of agent $l$ at time $k$ and $a_l \in \mathcal{A}= \{{endDevice}, {coordinator}, {router}\}$ is the type of the agent with $a_1=a_2=endDevice$, $a_4=a_5=a_7= router$ and $a_3=a_6=coordinator$. Agents $1$ and $2$ are controllable with dynamics given by
\begin{align} \label{eq:dyn_CS}
    q_l{[k+1]} &=q_l{[k]}+u_l{[k]},
\end{align}
where $l\in \{1,2\}$ and $u_l \in \mathcal{U} = [-0.2,0.2]$. 

\textbf{Connectivity conditions} Two agents $l$ and $l'$ are connected at time $k$ if both of the following conditions hold:
 \begin{itemize}
     \item The Euclidean distance between agents $l$ and $l'$  is less than a ﬁxed communication range $dist(l,l') \leq 2$.
     \item In the corresponding Voronoi diagram, the cells corresponding to agents $l$ and $l'$ are adjacent at time $k$.
 \end{itemize}
Snapshots of the team and the corresponding Voronoi and connection graphs at different times can be seen in Fig.\ref{fig:rnns}.

\textbf{Spatio-temporal specifications} We aim to steer agents $1$ and $2$ of type $endDevice$ from the initial positions to the area in the center (circle with radius $= 0.4$) in the time interval $[12,13]$, while staying connected to at least one agent of type $router$ at all times in $[3,13]$. We require all agents to avoid collision by keeping a safe distance of at least $0.15$ from each other at all times in $[0,13]$. The requirements described above can be specified by the following STREL formula, with $hrz(\varphi) =13$:

\begin{multline}\label{eq:phi_cs}
    \varphi = G_{[0,13]}(dist_{i\neq j}(q_i{[k]},q_j{[k]})>0.15)  \\
    \bigwedge G_{[3,13]} endDevice \mathcal{R}^{dist}_{\leq 2} router \\
    \bigwedge F_{[12,13]}(dist(q_{i,endDevice}{[k]}, origin)\leq 0.4).
\end{multline}

\textbf{Control problem} 
\emph{Given} the team of $N=7$ agents with dynamics \eqref{eq:dyn_CS}, STREL formula $\varphi$ (\eqref{eq:phi_cs}), initial connection graph $\lambda [0]$, initial state of the system $x[0]$, planning horizon $H = hrz(\varphi)=13$, 
\emph{find} an optimal control sequence $\mathbf{u}^{*12}$ that solves Problem \ref{problem:main}, where in \eqref{eq:problem1}, the cost function $J\left(u[k],x[k+1]\right)= \sum_{i=1}^{N} \sum_{k = 0}^{H-1} \norm{u_i[k]}^2$, $\gamma = 0.01$, and $\rho_c$ is replaced by its smooth version $\Tilde\rho_c$ (see Sec. \ref{sec:solOPT}).

\textbf{Comparing original and proposed semantics} We used PSO to solve the control synthesis problems with (i) the na\"{i}ve team robustness \eqref{eq:rho_all_naive} based on the original agent robustness \eqref{eq:strel-org} and (ii) the proposed robustness \eqref{eq: strel-ro},\eqref{eq: strel-ro-all} in the objective function. We solved each one with $100$ initializations and obtained no satisfying state-control trajectories for the original robustness and 69 satisfying state-control trajectories for the proposed robustness. We explain the results by noting that the proposed robustness has a varying search space (as opposed to the original robustness) due to the introduced functions $\sigma_{dist},\sigma_{routes},\sigma_{ag}$, which help the PSO algorithm to explore the search space and make it less prune to premature convergence.

\begin{table}[]
\begin{center}
\begin{tabular}{clcc}
\textbf{Algorithm} & \textbf{Success rate} & \textbf{\begin{tabular}[c]{@{}c@{}}Average \\ robustness\end{tabular}} & \textbf{\begin{tabular}[c]{@{}c@{}}Time\\ (seconds/run)\end{tabular}} \\
\textbf{SQP}       & 44.7                  & $0.0048$                                                               & $23.3$                                                        \\
\textbf{PSO}       & 71.7                  & $0.0037$                                                               & $30$                                                              \\
\textbf{PSO + SQP} & 93.8                  & $0.0050$                                                               & $32.9$    
\end{tabular}
\end{center}
\caption{Performance of different optimization methods}\label{tb:opt_CS}
\end{table}

\textbf{Dataset generation} We generated $m$ state-control trajectories with robustness score above a given threshold $\Tilde{\rho_c}^{*}\geq \epsilon_{min}=0.001>\epsilon_{\beta}$ by solving the control synthesis problem off-line. Choosing a good number of training samples $m$ is task-specific and depends on factors such as the complexity of the neural network and the complexity of the approximated function. In our case, we found that the RNN performs best with $m \geq 800$. We solved the control synthesis problem for the team with $1200$ random initializations. It took about \emph{6 hours} to execute the code and generate all the trajectories. We defined the \emph{success rate} as the percentage of state-control trajectories with robustness $\Tilde{\rho_c}^{*}\geq 0.001$. The success rate, average normalized robustness and computation times for solving the control synthesis problem with the proposed robustness using SQP, PSO and PSO+SQP are presented in Tab.~\ref{tb:opt_CS}.

\textbf{Training the RNN}
We used an LSTM network with four hidden layers to learn the controllers. Each hidden layer has $64$ nodes. We used $850$ trajectories for training and $275$ trajectories for testing. We trained the network for $700$ epochs. The training process took about six minutes.
\begin{figure}[]
  \begin{center}
    \includegraphics[width=0.30\textwidth]{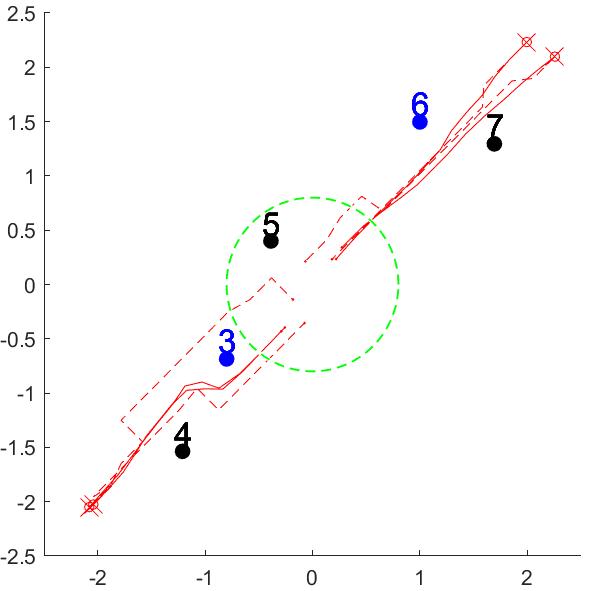}
  \caption{Trajectories generated from two different initializations (red crosses) of agents $1$ and $2$ (labels not shown for figure readability purpose)  using RNN-based controller (solid red) and optimization PSO+SQP (dashed red)}
    \label{fig:CS_optVSrnn}
  \end{center}
\end{figure}

\textbf{Results} The average robustness for the trajectories generated using the RNN-based controller from $275$ new initializations is $0.0037$ compared to $0.0050$ for trajectories generated using the proposed optimization method PSO+SQP. The average execution times for generating one trajectory using the RNN-based controller and PSO+SQP are $0.002$ seconds and $32.9$ seconds, respectively. The success rate for trajectories generated from new initializations using the RNN-based controller is $93\%$. Fig.\ref{fig:CS_optVSrnn} shows sample trajectories generated from two different initializations using the RNN-based controller and using SQP+PSO. The results presented above demonstrate that the learned RNN-based controllers achieve a high success rate and have significantly lower computation time compared to the PSO+SQP optimization.

\section{Conclusions and Future Research}
\label{sec:conclusion}

We proposed a framework for solving control synthesis problems for multi-agent networked systems from spatio-temporal specifications. We introduced new counting quantitative semantics for the Spatio-Temporal Reach Escape Logic (STREL), and used it to map the control problems to optimization problems. The proposed semantics are sound, smooth, and allow for spatial counting and optimizing the spatial configuration of the team for agents connectivity. We solve the optimization problem using a combination of heuristic and gradient-based algorithms in two stages. In the first stage, we utilize Particle Swarm Optimization (PSO) for search-space exploration to find the best candidate solution. In the second stage, Sequential Quadratic Programming (SQP) is initialized by the candidate solution obtained from PSO and employed until a pre-defined stopping criterion is met. To meet real-time control requirements, we learn Recurrent Neural Network (RNN) - based controllers from state-control trajectories generated by solving the optimization problem with different initializations. Directions of future research include using reinforcement learning for online control and extending the proposed framework to complex and large teams of agents. 

\bibliographystyle{IEEEconf}
\bibliography{main}

\end{document}